\newtheorem{rem}{Remark}
\newtheorem{lemma}{Lemma}
\newtheorem{prop}{Proposition}
\renewcommand{\thesection}{\arabic{section}}
\begin{document}


\begin{center}\huge{Thought Viruses and Asset Prices}\footnote{I thank Dominik Grafenhofer and Carl Christian von Weizsäcker for helpful and encouraging discussions. This version: 30.12.2018.}\end{center} 
\begin{center} \emph{Wolfgang Kuhle}\\\emph{University of Economics, Prague, Czech Republic, E-mail: wkuhle@gmx.de\\ Max Planck Institute for Social Law and Social Policy, Munich, Germany.}
\end{center}




\noindent\emph{\textbf{Abstract:} We use insights from
epidemiology, namely the SIR model, to study how agents infect
each other with ``investment ideas." Once an investment idea
``goes viral," equilibrium prices exhibit the typical ``fever
peak," which is characteristic for speculative excesses as
described, e.g., in \citet{Kin00}. Using our model, we identify a
time line of symptoms that indicate whether a boom is in its early
or later stages. Regarding the market's top, we find that prices
start to decline while the number of infected agents, who buy the
asset, is still rising. Moreover, the presence of fully rational
agents (i) accelerates booms (ii) lowers peak prices and (iii) produces broad, drawn-out, market tops.}\\
\emph{Keywords: Thought Viruses, Bounded Rationality, Rational Expectations}\\
\emph{JEL: D5, D52, E03, E32, E62 }

\hspace{0.3cm}

\emph{``More and more people realized the misconception on which
the boom rested even as they continued to play the game."}
\citet{Sor94}, p. 57.

\section{Introduction}\label{sg1}

Market practitioners frequently argue that they observe asset
prices, which are ``far away" from
``fundamentals."\footnote{\citet{Sor94}, pp. 27-141, provides a
detailed argument and a large collection of case studies. See also
the ``Mr. Market" parable in \citet{Gra73}, pp. 188-213, or
\citet{Fis03}, pp. 266-275. \citet{Kin00} presents a historical
account of such boom and bust episodes.} They also suggest that
many such deviations come in the form of boom-bust cycles, which
are hard to reconcile with traditional asset pricing models. The
current paper presents a simple epidemiological model where
investors ``infect" each other with their investment ideas. The
model generates market prices that are in line with the
practitioners' boom and bust observation. In turn, we use the
model to perform comparative statics, and to establish a time-line
of events that lead and lag market tops.

To model the infectious spread of investment ideas, we borrow the
standard SIR model from the literature on
epidemiology.\footnote{\citet{Ker27} present the mainstream SIR
specification used here. \citet{Die02} review the historical
literature on modelling epidemics dating back to Daniel
Bernoulli's original manuscript, which was first circulated in
1760.} The SIR framework has been used successfully to model the
spread of infectious diseases, computer viruses, chain letters,
and religions. In the current setting it governs the rate with
which agents adopt and discard their views on certain assets. The
SIR model's key prediction, namely that the mass of infected
agents is hump-shaped, generates the boom-bust sequence in stock
prices that practitioners emphasize.


We distinguish two main scenarios. First, we study a setting where
susceptible investors buy the asset once infected, and sell once
cured. Second, we study a setting where cured agents form rational
expectations, which can induce them to hold overpriced assets in
order to wait for further appreciation. Such rational expectations
on the part of cured agents make prices (i) rise faster (ii) peak
earlier and (iii) generate broad, drawn-out, market tops. Without
rational agents, prices rise at a slower pace, but they reach a
higher maximum. Finally, we find that price peaks lead the peak in
infected agents across both scenarios. Put differently, prices go
into decline while sentiment is still improving.


Having derived these properties, we compare the observable parts
of our predictions to a few prominent cases of boom and bust.
Using search queries from Google as a proxy for the mass of
infected agents, we find that both, search queries and the
corresponding asset's price, show the fever curve pattern that the
current model suggests. Moreover, prices indeed tend to peak
before the mass of infected agents does. Finally, slow moving
boom-bust cycles, where cured agents have time to form rational
expectations, have broad tops.

\emph{Literature:} \citet{Daw95} suggests that humans are
susceptible to ``Viruses of the mind." Such viruses can come in
the form of religions or political believes, which spread in a
contagious manner.\footnote{In different contexts, \citet{Har95}
and \citet{Kah12} discuss how agents adopt and discard different,
not necessarily correct, approaches to decision making. See also
\citet{Wei71b}, who proposes a model of changing tastes.}

Instead of adopting religions, our agents believe in certain
investment ideas, valuation techniques, or future technological
developments. Moreover, our agents work as financial missionaries
when they share their investment tips with colleagues, friends,
and neighbors. The analogy to an infectious process is even more
straight forward in specific episodes such as the ``gold fever"
epidemic, where a susceptible east-coast population was infected
by the first prospectors who returned from California.

\citet{Shi10} finds that the SIR specification can explain the
clustering of agents' real estate investments. \citet{Fen04},
\citet{Hon04,Hon05}, \citet{Ivk07}, and \citet{Bro08} find similar
evidence suggesting that neighbors emulate each other's investment
choices.\footnote{\citet{Kin00}, p. 15, notes that ``there is
nothing so disturbing to one's well-being and judgement as to see
a friend get rich."} \citet{Ash55} shows that a considerable
fraction of people follow other's choices even when they know that
these choices are wrong.

\citet{Shi17} suggests that the SIR specification helps to study
the spread of economic ``narratives," such as the idea of the
``Laffer curve." In our model, such narratives come in the form of
the ``BRICS" countries, asian tigers, ``.com" stocks, securitized
mortgage debt, or ``crypto currencies." Related,
\citet{Dal64,Dal65} have sparked a literature, e.g.,
\citet{More04}, that uses epidemiological models to study the
spread rumors.


\citet{Sch90}, \citet{Bik92} and \citet{Ban92} provide Bayesian
frameworks of herd behavior, which can be used to generate
boom-bust sequences in asset prices within a traditional rational
expectation framework. In this interpretation, the current SIR
specification studies how certain pieces of Bayesian information
permeate a given population of investors. Regarding rationality,
\citet{Kin00}, p. 15-16, points out that speculative manias are
concentrated in assets which are hard to understand, such as
options, futures, tulip seeds during winter, real estate and land,
goods manufactured for export markets, and more generally foreign
exchange. Recent episodes like the ``.com" crash, the US housing
crisis with its complex mortgage backed securities, and the rise
and fall in crypto-currencies clearly fit this description.

Our model coexists with a large number of alternative theories of
boom and bust. First, any Walrasian model, which features multiple
equilibria, can explain abrupt changes in price simply as a move
from one equilibrium to another. Equilibrium models of portfolio
insurance, flights to quality, rational herding, informational
avalanches, and game theoretic formulations, such as models of
exchange rate crises, can generate drastic price changes within a
rational expectations (RE) frame. One key difference between
theses RE models and the present one is the speed of adjustment.
Rational agents ensure that markets reprice, more or less,
immediately. That is, moving from one equilibrium to another
produces an instantaneous change in price. The current model
generates smooth booms and smooth busts, which are in line with
the observation that many boom bust phenomena unfold over time
frames ranging from several month to decades.\footnote{Japanese
real estate saw steep appreciation for 30 years before it reached
its 1989-1991 peak. In turn, prices declined for roughly 25
years.}

Section \label{S1} introduces the model. Section \ref{S2}
considers the impact of rational expectations. Section \ref{S4}
compares our theoretical results to a number of historic boom-bust
episodes. Section \ref{S5} concludes.

\section{The Model}\label{S1}

We recall the SIR model,\footnote{See e.g., \citet{Hir03}, pp.
235-239.} and add the asset market later. The SIR model studies
how a disease spreads among a mass $N$ of agents. These agents
belong to three groups. First, there is a group of susceptible
agents $S$. Second, there is a group of infected agents $I$.
Finally, there are recovered agents $R$. When infected agents meet
susceptible agents, they transmit the virus at rate $\beta$.
Infected agents recover at rate $\gamma$. Finally, we work with
continuous time $t$, such that the epidemic is characterized by a
first-order differential equation:
\begin{eqnarray}
\dot{S}&=&-\beta IS,\label{e1}\\
\dot{I}&=&\beta IS-\gamma I,\label{e2}\\
\dot{R}&=&\gamma I,\label{e3}\\
N&=&S+I+R, \quad N=N_1+N_2+N_3.\label{e4}
\end{eqnarray}
Equation (\ref{e1}) describes how susceptible agents get infected.
Equation (\ref{e2}) tracks the mass of infected agents: it adds
newly infected and subtracts recovered agents. Equation (\ref{e3})
accounts for cured agents. Finally, (\ref{e4}) ensures that the
overall population remains constant over time; Starting in $t=0$
we have $N_1$ susceptible agents, $N_2$ infected agents, $N_3$
recovered agents, and a total population $N$.

Combining (\ref{e1})-(\ref{e4}) we note:

\begin{lemma}\label{lemma1} Given initial conditions $\{N_1,N_2,N_3\}, N_i>0, i=1,2,3$, the SIR model has one stable
steady state, $\{N-R_{\infty},0,R_{\infty}\}$, where
$R_{\infty}=-\frac{\gamma}{\beta}ln(N-R_{\infty})+C_R,$ and
$C_R=N_3+\frac{\gamma}{\beta}\ln(N_1)$.\end{lemma}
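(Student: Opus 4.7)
The plan is to first recognize that any equilibrium must satisfy $\dot R = \gamma I = 0$, hence $I^* = 0$, and then (\ref{e1})--(\ref{e2}) hold automatically. So the set of fixed points is the whole line $\{(S^*,0,R^*) : S^* + R^* = N\}$, and the real task is to single out, via a first integral, the particular point on this line that the trajectory from $(N_1,N_2,N_3)$ approaches.

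To construct the first integral I would divide (\ref{e1}) by (\ref{e3}) to eliminate time, obtaining $dS/dR = -(\beta/\gamma)\,S$. Integrating with $S(0)=N_1$ and $R(0)=N_3$ yields the conserved relation $S(t) = N_1\exp(-(\beta/\gamma)(R(t)-N_3))$ along every trajectory. Next I would use that $\dot R = \gamma I \geq 0$, so $R(t)$ is nondecreasing and bounded above by $N$; hence $R(t)\to R_\infty$ and $\dot R(t)\to 0$, which forces $I(t)\to 0$ and $S(t)\to N - R_\infty$ via (\ref{e4}). Passing to the limit in the first integral gives $N - R_\infty = N_1\exp(-(\beta/\gamma)(R_\infty - N_3))$, and taking logarithms produces exactly the stated implicit equation with $C_R = N_3 + (\gamma/\beta)\ln N_1$.

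The step I expect to require the most care is showing that $R_\infty$ is well defined and unique in the admissible range. I would study $g(R) := N - R - N_1\exp(-(\beta/\gamma)(R-N_3))$ on $[N_3,N]$: one checks $g(N_3) = N_2 > 0$ (using $N_2>0$), $g(N) = -N_1\exp(-(\beta/\gamma)(N-N_3)) < 0$, and that $g'(R)$ is strictly decreasing in $R$; hence $g$ is unimodal --- either monotone decreasing or first increasing and then decreasing --- and crosses zero exactly once in $(N_3,N)$. The ``stability'' claim then follows not from a linearization argument --- the Jacobian along $\{I=0\}$ is degenerate, so the continuum of fixed points is only neutrally stable in the usual sense --- but from monotone convergence $R(t)\uparrow R_\infty$ combined with the first integral: among the continuum of neutral fixed points on $\{I=0\}$, $(N-R_\infty,\,0,\,R_\infty)$ is the unique one reached from the given interior initial datum, which is the sense in which it is \emph{the} stable steady state of the Cauchy problem.
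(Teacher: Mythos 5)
Your proof is correct and rests on the same computation as the paper's: dividing (\ref{e1}) by (\ref{e3}) gives $dS/dR=-(\beta/\gamma)S$, whose integral $S=N_1e^{-(\beta/\gamma)(R-N_3)}$ is exactly (\ref{l2}) in exponentiated form, and combining $I=0$ with (\ref{e4}) then yields the stated implicit equation for $R_\infty$. What you add is the content the paper leaves implicit: an argument that the trajectory actually converges to this point, an explicit uniqueness proof for $R_\infty$ on $[N_3,N]$ via the strict concavity of $g$ and the endpoint signs $g(N_3)=N_2>0>g(N)$, and the correct caveat that the fixed points form a continuum on which linearization is degenerate, so ``stable'' must be read as convergence of the given Cauchy problem to that particular point. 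One small repair in the convergence step: monotone convergence of $R$ does not by itself imply $\dot R\to 0$; instead note that $S$ is nonincreasing and bounded below while $R$ is nondecreasing and bounded above, so both converge, hence $I=N-S-R$ has a limit, and that limit must be zero, since otherwise $\dot R=\gamma I$ would eventually be bounded away from zero and $R$ would grow beyond $N$.
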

\begin{proof}  Combining  (\ref{e1}) with (\ref{e2}), we can integrate
\begin{eqnarray} I&=&-S+\frac{\gamma}{\beta}ln(S)+C_I, \quad C_I=N_1+N_2-\frac{\gamma}{\beta}\ln(N_1).\label{el1}\end{eqnarray}
Likewise (\ref{e1}) and (\ref{e3}) yield
\begin{eqnarray}R&=&-\frac{\gamma}{\beta}ln(S)+C_R,\quad
C_R=N_3+\frac{\gamma}{\beta}\ln(N_1).  \label{l2}\end{eqnarray}

From (\ref{e3}), we see that $I=0$ in steady state. Combining
$I=0$ with (\ref{el1}), (\ref{l2}) and (\ref{e4}), we have
$S_{\infty}=N-R_{\infty}$ and
$R_{\infty}=-\frac{\gamma}{\beta}ln(N-R_{\infty})+C_R$, where the
last equation defines a unique $R_{\infty}$.
\end{proof}

The steady state of Lemma \ref{lemma1} is reached via a transition
path, along which susceptible agents become infected, and infected
agents get cured. This transition path exhibits a peak in infected
agents:

\begin{lemma} If $N_1>\frac{\beta}{\gamma}$, then there exists a point in time $t_I^*>0,$ where the mass of infected agents $I$ peaks.\label{lemma2}
\end{lemma}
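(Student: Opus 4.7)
My approach is to read off the sign of $\dot{I}$ directly from (\ref{e2}) by factoring
\[ \dot{I} = I(\beta S - \gamma), \]
so that, on the transition path where $I > 0$, the sign of $\dot{I}$ coincides with the sign of $\beta S - \gamma$. At $t = 0$ we have $S(0) = N_1$ and $I(0) = N_2 > 0$. The threshold assumption on $N_1$ (which amounts to $\beta N_1 > \gamma$) then yields $\dot{I}(0) > 0$, so $I$ is strictly increasing on some right-neighbourhood of $0$.

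Next, I will invoke Lemma \ref{lemma1}, which delivers $I(t) \to 0$ as $t \to \infty$ since the unique stable steady state has zero infected mass. Combined with the preceding observation, this forces $I$ to be non-monotone on $[0,\infty)$: it starts strictly positive, grows initially, yet decays to zero asymptotically. Continuity of $I$ on $[0,\infty)$ therefore yields a global maximum attained at some interior time $t_I^* > 0$, at which $\dot{I}(t_I^*) = 0$. If uniqueness of the peak is desired, I would also note that $\dot{S} = -\beta I S < 0$ whenever $I, S > 0$, so $S$ is strictly decreasing, and hence $\beta S(t) - \gamma$ can change sign at most once.

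The only subtle point is verifying that $I$ remains strictly positive along the transition --- otherwise the factorization argument would not pin the critical point to $\beta S = \gamma$, and the appeal to Lemma \ref{lemma1} would require more care. This is handled once and for all by standard uniqueness for the ODE system (\ref{e1})--(\ref{e3}): the set $\{I = 0\}$ is invariant under the flow, so a trajectory with $I(0) = N_2 > 0$ cannot reach it in finite time. Beyond this I expect no genuine obstacle; the whole statement reduces to pairing the sign of $\dot{I}(0)$ under the threshold hypothesis with the asymptotic vanishing of $I$ supplied by Lemma \ref{lemma1}.
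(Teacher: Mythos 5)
Your proposal is correct, but it takes a genuinely different route from the paper. The paper works in the $(S,I)$ phase plane: it reuses the first integral $I=-S+\frac{\gamma}{\beta}\ln S+C_I$ from the proof of Lemma \ref{lemma1}, shows $dI/dS=0$ exactly at $S^*=\gamma/\beta$ with $d^2I/dS^2<0$, and then converts this $S$-maximum into a time $t_I^*$ by inverting the strictly decreasing map $t\mapsto S(t)$. You instead argue directly in time: factor $\dot I=I(\beta S-\gamma)$, obtain $\dot I(0)>0$ from the threshold condition, obtain $I(t)\to 0$ as $t\to\infty$, and deduce an interior global maximum, with uniqueness coming from the fact that $\beta S(t)-\gamma$ can change sign at most once because $\dot S<0$. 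Your route is more elementary (no first integral needed) and it sidesteps a point the paper glosses over, namely that $S(t)$ actually reaches the level $\gamma/\beta$ in finite time, since you get existence of the peak from the limiting behaviour rather than from inverting $S(t)=\gamma/\beta$. What the paper's route buys in exchange is the explicit characterization of the peak (it occurs where $S=\gamma/\beta$, with a closed-form peak value), which is what gets reused later in the proof of Proposition \ref{p1}. Two small caveats: Lemma \ref{lemma1} as stated only identifies the stable steady state, so ``$I(t)\to 0$'' needs the one-line convergence argument ($S$ is decreasing and $R$ is increasing, both bounded, hence both converge, and $\dot R=\gamma I$ then forces $I\to 0$) --- the paper, too, asserts this convergence only in prose; and you silently read the hypothesis $N_1>\beta/\gamma$ as $\beta N_1>\gamma$, i.e.\ $N_1>\gamma/\beta$, which is indeed the threshold your argument (and the paper's own later footnote) requires, but the discrepancy with the lemma's stated condition deserves an explicit remark.
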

\begin{proof} To prove that there exists a unique peak in infected agents in a period $t_I^*$, we recall (\ref{el1}):
\begin{eqnarray}I=-S+\frac{\gamma}{\beta}ln(S)+C_I, \quad C_I=N_1+N_2-\frac{\gamma}{\beta}\ln(N_1).\end{eqnarray}
Taking derivatives yields
\begin{eqnarray}\frac{dI}{dS}=-1+\frac{\gamma}{\beta}\frac{1}{S}=0,\end{eqnarray}
and
\begin{eqnarray}\frac{d^2I}{dS^2}=-\frac{\gamma}{\beta}\frac{1}{S^2}<0.\end{eqnarray}
Hence, there exists a global infection maximum at
$S^*=\frac{\gamma}{\beta}$, where
$I^*=-\frac{\gamma}{\beta}+\frac{\gamma}{\beta}ln(\frac{\gamma}{\beta})-C_S$.
To relate our peak in infected agents to a period in time, we note
that equation (\ref{e1}) ensures that $\dot{S}<0$ at all points in
time, such that the condition $S(t)=S^*=\frac{\gamma}{\beta}$ can
be inverted $t^*_{I}=S^{-1}(\frac{\gamma}{\beta})$. That is, the
mass of infected agents peaks in time $t^*_{I}$, and at this point
in time, we have $dI/dt=0$ and $d^2I/dt^2<0$.
\end{proof}

To prove Lemma \ref{lemma2}, it was convenient to express the mass
of infected agents as a function $I(S)$. To study how this peak in
infected agents relates to stock prices, it will be necessary to
account for the fact that different cohorts of infected agents buy
and sell at different prices. To do so, we denote cohorts $c_v$ of
agents by the (vintage) period $t=v$, where they were first
infected:
\begin{eqnarray}c_v=\beta I_vS_v.\end{eqnarray}
Taking into account that infected agents get cured at rate
$\gamma$, and summing over all infected cohorts, yields the mass
of infected agents in period $t$:
\begin{eqnarray}I=\beta\int_0^{t}I_vS_ve^{-\gamma(t-v)}dv.\label{I0}\end{eqnarray}
Using (\ref{I0}), the condition for $t^*_I$ from Lemma
\ref{lemma2} can be written as:
\begin{eqnarray}\frac{dI}{dt}=\beta I(t)S(t)-\beta\gamma\int_0^{t}I_vS_ve^{-\gamma(t-v)}dv=0.\label{I1}\end{eqnarray}
Regarding (\ref{I1}) we note that inserting (\ref{I0}) brings us
back to (\ref{e2}). Expression (\ref{I1}) will, however, be useful
once we have to track dated asset purchases and sales of different
cohorts.

\subsection{Market}

We assume that susceptible agents hold one unit of currency each.
When agents get infected, they buy the asset. Once cured, agents
sell their position. Regarding the asset market, we assume that
there is an exogenously given (excess) supply function:
\begin{eqnarray}X=\phi(P), \quad \phi'()>0,\quad \quad \phi(P_0)=0, \label{e5}\end{eqnarray}
which interacts with the demand of infected agents. Equation
(\ref{e5}) may be interpreted as the supply of ``.com" stocks,
tulips, crypto currencies, flats, which come to the market,
respectively, via new IPOs, professional florists, bitcoin farms,
or the incumbent population in an in-fashion residential area.

To compute infected agents' demand, we recall
(\ref{e1})-(\ref{e2}), and note that a cohort $c_v=\beta I(v)S(v)$
of agents, who were infected in period $t=v$, depreciates
exponentially such that the cohort's period $t$ size is:
\begin{eqnarray}c_v(t)=\beta I(v)S(v)e^{-\gamma(t-v)}.\label{e6}\end{eqnarray}
Regarding the period $t$ asset holdings, of all infected cohorts
$v$, we have:\footnote{We suppress the holdings of agents $N_2$,
who are infected in $t=0$. Adding these holdings, inflates
notation and does not change the results. Alternatively, we may
assume that $N_2$ is small.}
\begin{eqnarray}X_{I}=\beta\int_0^{t}\frac{I_vS_v}{P(v)}e^{-\gamma(t-v)}dv.\label{e7}\end{eqnarray}
where $\frac{1}{P(v)}$ is the number of shares that infected
agents buy in period $t=v$, when they invest their one unit of
currency into the speculative asset. The variable $X_I$ denotes
the number of shares, held by all cohorts of infected agents, in
period $t$.

Combining asset supply (\ref{e5}) with demand (\ref{e7}), the
equilibrium price is:
\begin{eqnarray}\phi(P)=\beta\int_0^{t}\frac{I_vS_v}{P(v)}e^{-\gamma(t-v)}dv\quad \Leftrightarrow\quad
P=\phi^{-1}\Big(\beta\int_0^{t}\frac{I_vS_v}{P(v)}e^{-\gamma(t-v)}dv\Big).\label{e8}\end{eqnarray}
Evaluation of (\ref{e8}) yields:
\begin{prop} The asset's price peaks in period $t^*_p$, where $t^*_p<t^*_I$. The long-run, steady state, price is $P_0$. \label{p1}\end{prop}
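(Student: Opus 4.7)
The plan is to differentiate the equilibrium condition (\ref{e8}) with respect to $t$, translate the question about the peak of $P$ into one about the sign of $\dot{X}_I$, and then compare $X_I(t)$ with $I(t)/P(t)$ at the time $t^*_I$ identified in Lemma \ref{lemma2}. Concretely, applying Leibniz' rule to the right-hand side of (\ref{e8}) gives
\begin{equation*}
\phi'(P(t))\,\dot{P}(t) \;=\; \beta\,\frac{I(t)S(t)}{P(t)} \;-\; \gamma\,X_I(t),
\end{equation*}
and since $\phi'>0$, the sign of $\dot{P}$ coincides with the sign of the right-hand side. So interior extrema of $P$ are precisely the zeros of $\dot{X}_I$.

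Next I would establish the strict inequality $\dot{P}(t^*_I)<0$ by a monotonicity/contradiction argument. Suppose for contradiction that $P$ has not yet peaked before $t^*_I$, so that $P(v)\leq P(t^*_I)$ for all $v\in[0,t^*_I]$, with strict inequality on a set of positive measure (which holds because the initial condition forces $P(0)=P_0<P(t^*_I)$, following from the suppressed $N_2$ convention used in (\ref{e7})). Then the weighted representation of demand satisfies
\begin{equation*}
X_I(t^*_I) \;=\; \beta\int_0^{t^*_I}\frac{I_vS_v}{P(v)}e^{-\gamma(t^*_I-v)}dv \;>\; \frac{1}{P(t^*_I)}\,\beta\int_0^{t^*_I}I_vS_v\,e^{-\gamma(t^*_I-v)}dv \;=\; \frac{I(t^*_I)}{P(t^*_I)},
\end{equation*}
where the last equality uses (\ref{I0}). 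Lemma \ref{lemma2} gives $\beta I(t^*_I)S(t^*_I)=\gamma I(t^*_I)$, so substituting into the differentiated equilibrium yields $\phi'(P(t^*_I))\dot{P}(t^*_I)=\gamma I(t^*_I)/P(t^*_I)-\gamma X_I(t^*_I)<0$, contradicting the assumed non-decreasing behavior on $[0,t^*_I]$.

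I then need to rule out the degenerate case that $P$ never rises at all, which is immediate: at $t=0^+$ we have $X_I\to 0$, $I=N_2>0$, $S=N_1>0$, so $\dot{X}_I(0^+)=\beta N_1N_2/P_0>0$, hence $\dot{P}(0^+)>0$. By the intermediate value theorem applied to the continuous function $\dot{P}$ on $[0,t^*_I]$, there exists a first $t^*_p\in(0,t^*_I)$ with $\dot{P}(t^*_p)=0$, and since $\dot{P}$ changes from positive to non-positive there, this is indeed a (local) price maximum. Hence $t^*_p<t^*_I$.

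For the long-run claim I would invoke Lemma \ref{lemma1}: as $t\to\infty$, $I(t)\to 0$ and $S(t)\to N-R_\infty$, so the inflow $\beta I_vS_v$ vanishes as $v\to\infty$. Combined with the exponential decay factor $e^{-\gamma(t-v)}$ killing contributions from old cohorts (and the bound $P(v)\geq P_0>0$ keeping $1/P(v)$ bounded), a split-the-integral estimate on $[0,T]$ and $[T,t]$ with $T$ large shows $X_I(t)\to 0$. Passing to the limit in $\phi(P(t))=X_I(t)$ and using that $\phi$ is strictly increasing with $\phi(P_0)=0$ yields $P(t)\to P_0$, which is the stated steady-state price. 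The main obstacle is the comparison step in paragraph two, where one must handle the implicit dependence of $X_I$ on the whole path $\{P(v)\}_{v\leq t}$ and justify the strict inequality; everything else reduces to elementary calculus applied to the equilibrium condition.
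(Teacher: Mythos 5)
Your proposal is correct and rests on the same key mechanism as the paper's proof: the comparison between the price-weighted cohort integral $X_I=\beta\int_0^t\frac{I_vS_v}{P(v)}e^{-\gamma(t-v)}dv$ and the unweighted one $I=\beta\int_0^tI_vS_ve^{-\gamma(t-v)}dv$, i.e.\ the fact that early buyers bought cheaply and therefore carry more shares per unit of currency than late buyers. The difference is where the comparison is evaluated: the paper evaluates it at the price peak $t^*_p$ (multiplying (\ref{e9}) by $P(t)$ and using $P(t)/P(v)>1$ to conclude $dI/dt>0$ there, so infections are still rising), whereas you evaluate it at $t^*_I$ (using $\beta S(t^*_I)=\gamma$ from Lemma \ref{lemma2} and $X_I(t^*_I)>I(t^*_I)/P(t^*_I)$ to force $\dot P(t^*_I)<0$ under the hypothesis that no peak has yet occurred). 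These are mirror images of one another and both are valid; your treatment of the initial rise and of the long run is also fine (the long-run step can be shortened to the paper's one-line bound $X_I\leq I/P_0$, which follows from $P(v)\geq P_0$). The one piece of the paper's proof you omit is the uniqueness of the peak, which the paper obtains from the second-order condition and the sign analysis of $\ddot S$; as written, your argument delivers a local maximum before $t^*_I$ but does not by itself exclude a later, higher peak, so the definite article in ``the asset's price peaks at $t^*_p<t^*_I$'' is not fully earned. This is easily repaired within your own framework: apply your inequality at a hypothetical global maximizer $t^{**}\geq t^*_I$, where $P(v)\leq P(t^{**})$ for all $v\leq t^{**}$ gives $X_I(t^{**})>I(t^{**})/P(t^{**})$ and $S(t^{**})\leq\gamma/\beta$ gives $\dot P(t^{**})<0$, contradicting maximality; hence the global peak itself lies strictly before $t^*_I$. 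Either add that observation or the paper's $\ddot S$ argument if you want the full hump-shaped characterization of the price path.
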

\begin{proof} The asset price peaks when $\dot{P}=\frac{1}{\phi'(X_I)}\frac{dX_I}{dt}=0$. From (\ref{e7}) it
follows that this condition for $t^*_p$ can be written as
\begin{eqnarray}\frac{dX_I}{dt}=\beta
\frac{I(t)S(t)}{P(t)}-\beta\gamma\int_0^{t}\frac{I_vS_v}{P(v)}e^{-\gamma(t-v)}dv=0.\label{e9}\end{eqnarray}
Regarding the derivative $\frac{dX_I}{dt}$, we first note that
$\lim_{t\rightarrow 0}\frac{dX_I}{dt}>0$. That is, the price is
increasing at $t=0$. Moreover, we recall (\ref{e7}) and note that
$\lim_{t\rightarrow \infty}X_I=0$\footnote{To see this note that
$\beta\gamma\int_0^{t}\frac{I_vS_v}{P(v)}e^{-\gamma(t-v)}dv<\gamma\frac{I}{P_0}$
and recall that $\lim_{t\rightarrow\infty}I=0$.}, i.e., in the
long run steady state, the demand of infected speculators is zero.
Hence, the long run price is $P_0=\phi^{-1}(0)$.

\emph{Unique peak price:} We have seen that, starting with a price
$P_0$, the price is first increasing in time and then eventually
reverts to $P_0$ in the long-run. Hence, there must be at least
one $t^*_P$, where the price peaks. This period is implicitly
defined by condition (\ref{e9}). To show that the peak is unique,
we study the second-order condition:
\begin{eqnarray} \frac{d^2X_I}{dt^2}&&=-\frac{1}{P}\ddot{S}-\frac{1}{P^2}\dot{P}\dot{S}-\beta\gamma\Big[\beta
\frac{I(t)S(t)}{P(t)}-\beta\gamma\int_0^{t}\frac{I_vS_v}{P(v)}e^{-\gamma(t-v)}dv\Big]\\
&&=_{|\dot{P}=0,(\ref{e9})}-\frac{1}{P}\ddot{S}<0.\label{second}
\end{eqnarray}
To interpret (\ref{second}), we recall
$\ddot{S}_{|(\ref{e1})}=-\beta(\dot{I}S+I\dot{S})$ and rearrange
it using (\ref{e2}), such that:
\begin{eqnarray} \frac{\ddot{S}}{\dot{S}}\frac{1}{\beta}=-I+S-\frac{\gamma}{\beta}\label{e12}\end{eqnarray}
For the terms in (\ref{e12}), we note that (\ref{e1}) ensures
$\frac{1}{\dot{S}}<0$, and from Lemma \ref{lemma2} we know
$S-\frac{\gamma}{\beta}<0$ for all $t>t_I^*$. If the mass of
susceptible agents is sufficiently large,\footnote{As we have seen
in Lemma \ref{lemma2} we need $N_1>\frac{\gamma}{\beta}$ for a
peak in $I$ to exist. Moreover, we need only a small number of
initially infected agents to start an epidemic, i.e., we can pick
an arbitrarily small number $N_2$ for the number of initially
infected agents.} at the beginning of time, $\ddot{S}$ starts out
negative, but before time $t_I^*$ is reached, where
$S=\frac{\gamma}{\beta}$, it will turn positive, and stay positive
ever after. During the period where $\ddot{S}>0$, we may have
price peaks, but cannot have minima. Once, $\ddot{S}<0$ we may
have minima, but no maxima. That is, in $t=0$, prices start to
increase away from the starting level $P_0$. This price increase
first accelerates until the second derivative changes signs, and
the price increase decelerates until a peak is reached. After the
peak, prices start to decline.


Finally, it remains to show that $t^*_p<t^*_I$. To do so we
compare the condition (\ref{e9}), for the period where the price
peaks, with the condition for $t^*_I$, where the number of
infected agents peaks. The mass of infected agents peaks when:
\begin{eqnarray}\frac{dI}{dt}=\beta I(t)S(t)-\beta\gamma\int_0^{t}I_vS_ve^{-\gamma(t-v)}dv=0.\label{e111}\end{eqnarray}
Multiplying (\ref{e9}), the condition for the period where the
price peaks, with $P(t)$, yields
\begin{eqnarray}\frac{dX_I}{dt}P(t)=\beta
I(t)S(t)-\beta\gamma\int_0^{t}\frac{I_vS_vP(t)}{P(v)}e^{-\gamma(t-v)}dv=0.\label{e11}\end{eqnarray}
The difference between (\ref{e111}) and (\ref{e11}) is the term
$\frac{P(t)}{P(v)}$. Regarding this term, we note that, as long as
the price is increasing between time $0$ and $t$, we have
$\frac{P(t)}{P(v)}>1$ for all $v<t$. Accordingly, for any
arbitrary period of time $t$, for which the price is increasing,
we have $\frac{dI}{dt}>\frac{dX_I}{dt}P(t)$. Hence, when the price
peaks at $t^*_p$, we have $\frac{dX_I}{dt}P(t^*_p)=0$ and
$\frac{dI}{dt}>0$. That is, when the price peaks, the mass of
infected agents is still increasing in time.

Put differently, early buyers have amassed large wealth, and when
they start to sell, they have a greater weight in the market than
those agents who buy in late with their one unit of
currency.\end{proof}






\subsection{Depressions}

In the current model we have discussed how investors, who infect
each other with euphoria, can bid up prices. The same argument
works in reverse when depression spreads. That is, we may model an
infected agent as someone who adopts the view that stocks or real
estate prices will never appreciate again. In that case, the same
mechanism works in reverse, creating a U-shaped price pattern.
Regarding this depression, we can once again show that asset
prices bottom before the selling sentiment does. 

\section{Rational Expectations}\label{S2}

\emph{At the top of the market there is hesitation, as new
recruits to speculation are balanced by insiders who withdraw.}
\citet{Kin00} p. 17

So far we have assumed that agents simply close their position
once they recover from their false investment thesis. Put
differently, when agents understood that they owned an over valued
asset, they sold. Another way to model recovering agents is to
assume that they fully understand the model that they operate in.

When cured agents know the entire model, they can find themselves
in two different situations. First, the agent gets cured after the
price has peaked. In this scenario it is clearly rational to sell
immediately. Second, the agent gets cured while the market price
is still increasing. In this second scenario, cured agents will
keep the asset for as long as it appreciates. To do so, they have
to compute the point in time when the market peaks. To find this
point, they must compute the asset holdings of all other agents
who are also waiting to exit at the top. Second, they must know
the asset holdings of those who are currently infected, but will
get cured before the price peaks. Finally, the mass of susceptible
investors who will buy in the future is important, since these
investors provide the liquidity that rational sellers need to exit
their positions.



Two points in time are thus of particular importance. First, there
is the point in time $t_1,$ where the price peaks, and rational
agents start to sell to newly infected buyers. At $t_1$ the sum of
future net buying of newly infected agents, must be sufficient to
absorb all the holdings that rational agents, who were waiting for
the price to peak, have accumulated up until $t_1$. To calculate
the net buying of newly infected agents, we have to identify the
point in time $t_2$ where the buying of newly infected agents is
exactly offset by the selling of newly cured agents. Time $t_2$ is
the point where the last rational agent, who was waiting for the
market top, must have sold out. After $t_2$, buying of newly
infected agents falls short of the selling of newly cured agents
and prices decline.

We start with $t_2$, the point in time where the buying of newly
infected agents is exactly offset by the selling of newly cured
agents:
\begin{eqnarray}\frac{dX_I}{dt}=\beta
\frac{I(t)S(t)}{P(t)}-\beta\gamma\int_0^{t}\frac{I_vS_v}{P(v)}e^{-\gamma(t-v)}dv=0.\label{e99}\end{eqnarray}
This condition is the same as (\ref{e9}) except for the price
history, i.e., the values $P(v)$ are different since cured agents
did not sell immediately.\footnote{Period $t_2$ therefore does not
coincide with $t_P^*$.}

Period $t_1$, is the period when rational agents start to sell:
\begin{eqnarray}\int_{0}^{t_1}\Big(1-e^{-\gamma(t_1-v)}\Big)\beta
\frac{I_vS_v}{P(v)}dv=\beta\int_{t_1}^{t_2}\frac{I_vS_v}{P(v)}e^{-\gamma(t_2-v)}dv=0.\label{e999}\end{eqnarray}
The RHS of (\ref{e999}) describes the net buying of newly infected
agents between $t_1$ and $t_2$. This must be equal/sufficient to
absorb the assets that cured agents accumulated (LHS) between
$t=0$ and $t_1$ with the motive to sell at the market top. Put
differently, (\ref{e999}) ensures that the entire selling of cured
agents can be absorbed through the buying of newly infected
agents. And once period $t_2$ is reached, the holdings of cured
agents, who waited for the top, have been passed on completely to
newly infected agents. Finally, the price at which rational agents
sell is:
\begin{eqnarray}P^*=\phi^{-1}\Big(\int_{0}^{t_1}\beta\frac{I_vS_v}{P(v)}dt\Big).\label{e9999}\end{eqnarray}
Where (\ref{e9999}) reflects that there is no selling before time
$t_1$ is reached. Taking (\ref{e99})-(\ref{e9999}) together we
have:
\begin{prop}\label{p2} (Broad-Peak-Theorem): The market price remains constant at its peaks $P^*$ as long as $t\in[t_1,t_2]$.\end{prop}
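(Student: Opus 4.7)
The plan is to verify that $P(t) \equiv P^*$ on $[t_1, t_2]$ is jointly consistent with instantaneous market clearing and with the optimal behaviour of rational cured agents, so that the three defining equations (\ref{e99}), (\ref{e999}), and (\ref{e9999}) indeed pin down a price plateau of length $t_2 - t_1$.

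First I would argue that $P^*$ is the unique admissible price on the plateau. By construction, rational cured agents have been holding their inventory in the expectation of unloading it at the market top; given this conjecture, their best response to a realised price path is of bang-bang form. If $P(t) > P^*$ at any point in $[t_1, t_2]$, every rational holder rushes to sell, producing unbounded supply that cannot clear against the fixed $\phi(\cdot)$; if $P(t) < P^*$, each strictly prefers to continue waiting, so supply from this group vanishes. Only $P(t) = P^*$ leaves the rational reservoir indifferent about the precise timing of sales, and hence only this price is compatible with a positive, interior flow of supply from that reservoir.

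Second I would verify that market clearing at $P^*$ is feasible throughout $[t_1, t_2]$ and becomes infeasible immediately afterwards. Since $\phi(P^*)$ is fixed, the aggregate number of shares in public hands must stay constant at the level $\beta \int_0^{t_1} I_v S_v / P(v)\, dv$ reached at $t_1$, which by (\ref{e9999}) is exactly $\phi(P^*)$. Differentiating this aggregate-holdings identity, the required instantaneous sales from the rational reservoir equal the excess of new infected purchases over immediate sales by newly cured agents. By an argument analogous to the second-order analysis in Proposition \ref{p1}, this excess remains nonnegative on $[t_1, t_2]$ and vanishes precisely at $t_2$ in virtue of (\ref{e99}). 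Integrating this flow over the plateau yields the total quantity supplied from the reservoir, i.e.\ the right-hand side of (\ref{e999}); by construction this matches the left-hand side, which records the rational inventory accumulated over $[0, t_1]$. Hence the reservoir is exhausted exactly at $t_2$, and for $t > t_2$ the infected-holdings dynamics pull $X_I$ downward, forcing the price to descend below $P^*$.

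The main obstacle is the bookkeeping in the second step. During the plateau one must simultaneously track three groups --- agents still infected, pre-$t_1$ rational sellers actively unloading their reservoir, and agents cured during $[t_1, t_2]$ who are indifferent to timing because the price is locally flat --- and verify that any admissible selling pattern for the indifferent group leaves the residual supply requirement on the reservoir well-defined and compatible with the integral balance encoded in (\ref{e999}). Once this accounting is settled, the plateau result follows immediately from market clearing at $P^*$.
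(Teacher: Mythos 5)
Your argument is essentially the paper's own: the paper offers no separate proof of Proposition \ref{p2} beyond asserting that it follows from the construction (\ref{e99})--(\ref{e9999}), and your flow accounting---pre-$t_1$ rational holders supply exactly the gap between newly infected buying and newly cured selling, so aggregate holdings stay at $\phi(P^*)$ and the reservoir is exhausted precisely at $t_2$, after which the price must fall---is that construction made explicit. Your additional indifference step (only a flat price at $P^*$ is compatible with rational agents' timing of sales) elaborates what the paper leaves implicit rather than taking a different route.
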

Proposition \ref{p2} relies on the assumption that cured agents
know the correct model and all its coefficients. Moreover, they
are able to solve the model. Proposition \ref{p2} should thus be
interpreted as a reference point, which we can use as a comparison
for the ``fever peak" which we derived in Proposition \ref{p1}.
The fever peak relied on the assumption that recovered agents
cannot fully understand the model, and thus they cannot correctly
anticipate the top. Hence, they simply close their position. Put
differently, these agents just discard the wrong thesis they used
to buy the asset, but they do not replace this thesis with a new
thesis, which tells them, e.g., that the price may appreciate
further. To close our comparison of these two cases, we make three
remarks. Whenever necessary, we denote variables that correspond
to the case where cured agents simply sell by a subscript M. Cases
where cured agents form rational expectations are denoted by $RE$.
We begin by noting that the price increase is more rapid when
cured agents from rational expectations:
\begin{rem} For the period where $t\in(0,t_1]$ we have $P_M<P_{RE}$. \label{rem1}\end{rem}
\begin{proof} In both scenarios, the buying is
governed by the same infection process. The scenarios differ
regarding selling. In the RE setting, where cured agents form
rational expectations, there is no selling during the period
$t\in[0,t_1]$. Asset holdings, and prices, are thus always higher in the RE setting than in setting $M$ where agents sell immediately after being cured.
\end{proof}

\begin{rem} The peak price is lower under rational expectations $P_{RE}^*<P_M^*$.\label{rem3} \end{rem}
\begin{proof} Let us first recall that all rational agents have sold out, by the time that $t_2$ is
reached. Hence, the mass of infected agents who hold the
speculative asset is the same in $t_2$ across both scenarios RE
and M. To proof our remark by contradiction, we now assume that
$P_M<P_{RE}$ for all $t\in[0,t_2)$. Under this assumption period
$t_2$ asset holdings are:
\begin{eqnarray} X_{RE}=\beta\int_0^{t_2}\frac{I_vS_v}{P_{RE}(v)}e^{-\gamma(t-v)}dv<\beta\int_0^{t_2}\frac{I_vS_v}{P_{M}(v)}e^{-\gamma(t-v)}dv=X_M. \label{r3}\end{eqnarray}
This however means that $P(X_{RE})<P(X_M)$, and contradicts our
assumption $P_M<P_{RE}$ for all $t\in[0,t_2)$. Hence, the price
$P_M$ must exceed $P_{RE}^*$ before $t_2$ is reached.
\end{proof}

\begin{rem} $t_1<t_P^*<t_2<t_I^*$.\end{rem}
\begin{proof} $t_1<t_P^*$: The price $P_{RE}$ peaks in period $t_1$ and remains
at this level until $t_2$ is reached. To see that $t_1<t_P^*$, we
recall Remark \ref{rem1}, i.e., that $P_{RE}>P_M$ for all
$t\in(0,t_1]$. At the same time, Remark \ref{rem3} shows that
$P_M$ must exceed $P_{RE}$ before $t_2$ is reached. Hence, the
date $t_P^*$, where $P_M$ peaks, must be such that $t_1<t_P^*$.


$t_2<t_I^*$ follows immediately from the same argument given in
the proof of Proposition \ref{p2}, which relies on the comparison
of equations (\ref{e111}) and (\ref{e11}). That is, in period
$t_2$, all newly cured agents bought at prices less or equal
$P^*_{RE}$. Thus, given that their one Dollar investment
appreciated, they have a higher monetary weight than the newly
infected buyers. Hence, at $t_2$, the mass of infected agents must
still be increasing to absorb the selling. An alternative way, to
observe the same thing, is to show that the price is already in
decline at $t_I^*$.



Inequality $t_P^*<t_2$: We have shown in Remark \ref{rem3} that
$P_M$ must exceed $P_{RE}^*$ before $t_2$. Does the peak in $P_M$
also occur before $t_2$? If $P_{M}>P_{RE}$ then $X_M>X_{RE}$.
taking derivatives, at $t_2$, yields:
\begin{eqnarray} 0=\frac{IS}{P_{RE}}-\gamma\int_0^{t_2}\frac{I_vS_v}{P_{RE}(v)}e^{-\gamma(t-v)}dv\gtreqqless
\frac{IS}{P_{M}}-\gamma\int_0^{t_2}\frac{I_vS_v}{P_{M}(v)}e^{-\gamma(t-v)}dv.
\label{r4}\end{eqnarray} For $t_2<t^*_P$, we must have
$P_M>P_{RE}$ in $t_2$ This means that
$\frac{IS}{P_{M}}<\frac{IS}{P_{RE}}$ since
$P=P(\gamma\int_0^{t_2}\frac{I_vS_v}{P_{RE}(v)}e^{-\gamma(t-v)}dv)$,
we also know that
$\gamma\int_0^{t_2}\frac{I_vS_v}{P_{RE}(v)}e^{-\gamma(t-v)}dv<\gamma\int_0^{t_2}\frac{I_vS_v}{P_{M}(v)}e^{-\gamma(t-v)}dv$.
Rearranging (\ref{r4}) thus gives:
\begin{eqnarray} IS\Big(\frac{1}{P_{RE}}-\frac{1}{P_{M}}\Big)-
\Big(\gamma\int_0^{t_2}\frac{I_vS_v}{P_{RE}(v)}e^{-\gamma(t-v)}dv-\gamma\int_0^{t_2}\frac{I_vS_v}{P_{M}(v)}e^{-\gamma(t-v)}dv\Big)>0
\end{eqnarray}
Hence, $\frac{dX_M}{dt}(t_2)<0$, i.e., the price $P_M$ is in
decline at $t_2$, and its peak must have occurred earlier. That
is, $t_P^*<t_2$.
\end{proof}






\section{Examples}\label{S4}

In this section, we use data from Google Ngrams and Goolde trends
to study some of our predictions. Across cases, there is strong
support for the epidemiological fever-peak. The search
queries/literature mentions for particular assets show a
pronounced hump-shape pattern, which coincides with a hump shaped
pattern in these assets' prices. Moreover, in line with our
model's prediction, prices peak earlier than search
queries/literature mentions. 

\emph{Bitcoin:} Using Google trends, we have the search history
for the term ``Bitcoin." Bitcoin search peaked in the period
between December 17th and 23rd at 100. Before, Bitcoin search was
at a level of 3 between 2013 and April 2017. Dollar prices for
Bitcoin went from 1200 in April 2017 to their 19.587 peak on 16th
of December 2018. While search for Bitcoin continued to rise
sharply between December 17th and 23rd, prices dropped to around
14000\footnote{Prices are taken from Yahoo Finance. December 23rd
highs and lows were roughly 15000 and 13000.} on December 23, and
have continued to decline alongside with search queries, to 3800
respectively 10 in December 2018.


\emph{The ``.com" boom:} Using Google Ngrams data for the period
1980-2008, we find that the frequency with which the terms
``Nasdaq," ``MSFT" and ``AMZN" are mentioned 
follow the hump-shaped pattern suggested by the SIR model. This
pattern extends to the corresponding prices. Regarding the time
line of peaks, we note that mentions of the ``Nasdaq" peaked in
2001, while the Nasdaq composite index peaked in March 2000. The
MSFT (Microsoft) stock peaked in late 1999, mentions of the
``MSFT" ticker symbol peaked in 2004. Mentions of the AMZN
(Amazon) ticker peaked in 2002 while the stock peaked in late
1999. Mentions of ``new economy" and ``software" both peaked in
2002.


\emph{Japanese Nikkei index:} Using Google Ngrams (case
insensitive), we view the term ``Made in Japan" as a proxy for the
sentiment towards the Japanese Nikkei industrial index. The Nikkei
peak in 1989 leads the ``Made in Japan" peak between
1990-1994.




\section{Conclusion}\label{S5}

We model booms and busts in asset prices as an epidemic process:
agents infect each other with their investment ideas, valuation
techniques, or believes in future technological developments. To
model how such ideas permeate a large population of susceptible
investors we have used the classic SIR specification from the
literature on epidemiology.

Our model generates sentiment driven booms-bust cycles, which
market practitioners emphasize. In addition to producing such
price swings, our model can be used to perform comparative
statics. In particular, we find that rational agents (i)
accelerate booms (ii) lower peak prices and (iii) make for broad,
drawn-out, market tops. Finally, regardless of whether cured
agents are rational or not, prices peak before sentiment, i.e. the
mass of infected agents, does.


Google data indicate that interest in speculative assets indeed
exhibits boom-bust patterns akin to the SIR model's infection
peak. Moreover, the peak in price tends to precede the peak in
queries. The ``Bitcoin" boom with its sharp peak was more in line
with our scenario, where cured agents are best advised to sell
quickly. Deeper markets had broader peaks, which give well
informed investors more time to trade with late stage optimists.

\newpage

\begin{appendix}

\addcontentsline{toc}{section}{References}
\markboth{References}{References}
\bibliographystyle{apalike}
\bibliography{References}

\end{appendix}


\end{document}